\def\be{\begin{equation}}
\def\ee{\end{equation}}
\def\bea{\begin{eqnarray}}
\def\eea{\end{eqnarray}}
\def\bel{\begin{align}}
\def\el{\end{align}}
\def\ni{\noindent}
\def\nn{\nonumber}
\def\R{\mathbb{R}}
\def\s{\sigma}
\def\a{\alpha}
\def\b{\beta}
\def\t{\tau}
\def\N{\mathbb{N}}
\newcommand{\ie}{\textit{i.e. }}
\newcommand{\meanv}[1]{\left\langle#1\right\rangle}
\newcommand{\nocontentsline}[3]{}
\newcommand{\tocless}[2]{\bgroup\let\addcontentsline=\nocontentsline#1{#2}\egroup}
\DeclareMathSymbol{\leqslant}{\mathalpha}{AMSa}{"36} 
\DeclareMathSymbol{\geqslant}{\mathalpha}{AMSa}{"3E} 
\DeclareMathSymbol{\eset}{\mathalpha}{AMSb}{"3F}     
\renewcommand{\leq}{\;\leqslant\;}                   
\renewcommand{\geq}{\;\geqslant\;}                   
\newcommand{\dd}{\,\text{\rm d}}
\DeclareMathOperator{\RS}{RS}
\newcommand{\E}[1]{E\left[#1\right]}
\def\ie{\textit{i.e. }}
\def\a{\alpha}
\def\b{\beta}
\def\s{\sigma}
\def\t{\tau}
\def\R{\mathbb{R}}
\def\E{E}
\theoremstyle{plain}
\newtheorem{theorem}{Theorem}[section]
\newtheorem{lemma}[theorem]{Lemma}
\newtheorem{definition}[theorem]{Conjecture}
\numberwithin{equation}{section}
\title[RS free energy of deep RBM\MakeLowercase{s}]
{Minimax formula for the replica symmetric free energy of deep restricted Boltzmann machines}
\author{Giuseppe Genovese}
\address{Institute of Mathematics, University of Zurich, Winterthurerstrasse 190, 8057 Zurich, Switzerland.}
\email{giuseppe.genovese@math.uzh.ch}
\date{\today}                     
\begin{document}
\maketitle

\begin{abstract}
We study the free energy of a most used deep architecture for restricted Boltzmann machines, where the layers are disposed in series. Assuming independent Gaussian distributed random weights, we show that the error term in the so-called replica symmetric sum rule can be optimised as a saddle point. This leads us to conjecture that in the replica symmetric approximation the free energy is given by a $\min\max$ formula, which parallels the one achieved for two-layer case. 
\vspace{0.5cm}

\ni {\bf MSC:} 82D30, 49K35.
\end{abstract}

\section{Introduction}\label{sect:intro}

The deep restricted Boltzmann machine (dRBM) is a widely studied generative model, first introduced in \cite{deep}, in which many RBMs are piled up in a serial architecture. Indeed in many practical applications RBMs fail to model complex data distributions without a careful choice of the weight initialisation in the learning algorithm and augmenting the depth aims chiefly to increase the representational power of the model. Similarly to deep neural networks, the idea of dRBMs is that the multi-layer architecture increases the level of nonlinearity of the model and allows high-order representations in the hidden layers apt to capture higher-order correlations of the data. 

Despite the underlying bipartite structure permitting Gibbs sampling to work efficiently, dRBMs are hard to train, which makes other generative models preferred in practice. The typical gradient-based algorithms as for instance contrastive divergence \cite{CD,pCD} get more easily stuck in poorly representative local maxima of the log-likelihood, due to the more complex landscape. Many improvements of the standard algorithms have been proposed \cite{deep2,deep3}, but the problem is substantially difficult due highly non-convex structure of the log-likelihood (a feature of RBMs, increasing with the depth). One major source of difficulty in the log-likelihood is the logarithm of the partition function of the model (physicist's free energy, in this paper we adopt this terminology). Indeed its lack of convexity makes gradient ascent based optimisation algorithms and Montecarlo approximation methods very sensitive to the initialisation. 

The aim of this note is to obtain some more information of the convexity (or lack thereof) of the free energy. We study the replica symmetric (RS) sum rule of the free energy of dRBMs, assuming the weight matrix to be random, with independent standard Gaussian entries, as customary in the spin glass literature. Such a sum rule is obtained by Gaussian interpolation between the energy term of the dRBM and some independent random biases. The biases are taken to be centred Gaussian random variables, whose variances can be seen as Lagrange multipliers which one can optimise on. This optimisation is in fact the main focus of the paper: we show that it yields a $\min\max$ variational formula similar the one proved for the standard RBM in \cite{NN,bip}. 

Next we introduce the objects we will deal with and state precisely the main result. 

\subsection{The model}
The model is defined as follows. Let $N\in\N$ denote the total number of units and $\nu\geq 2$ the number of layers, indexed by $x=1,\dots,\nu$. 
The $x$-th layer has $N_x$ units, with $\sum_x N_x =N$ and $\a_x:=\lim_N N_x/N\in(0,1)$, with $\sum_x\a_x=1$. We will often use $\a_x$ to denote the ratio $N_x/N$ also at finite size with a small abuse of notation. A total configuration is indicated by $\s$ and the $x$-th layer configuration by $\s^{(x)}$. 

We assume the units $\s^{(x)}$ for all $x\in[\nu]$ to be independent random variables with Bernoulli $\pm1$ a priori distributions, which we bias independently by the entry-wise constant vectors $b^{(x)}:=(b^{(x)},\ldots,b^{(x)})\in\R^{N_x}$, $b^{(x)}\in \R$ (with a little abuse of notations we indicate the vector and its constant components with the same symbol). Expectation values w.r.t. the $x$-th prior distribution is denoted by $\widehat E_{\s^{(x)}}$. Mostly we will omit the biases in the notation. 

\ni Two consecutive layers $x,x+1$ ($x\in[\nu-1]$) interact via the RBM Hamiltonian
\be\label{eq:H}
H^{(x)}_N(\s^{(x)},\s^{(x+1)}):=-\sum_{\substack{i\in[N_x]\\j\in[N_{x+1}]}}\frac{\xi_{ij}^{(x)}}{\sqrt{N}}\s_i^{(x)}\s_j^{(x+1)}\,,
\ee
where the $\{\xi_{ij}^{(x)}\}_{\substack{x\in[\nu-1]\\i\in[N_x]\\j\in[N_{x+1}]}}$ are $\mathcal{N}(0,1)$ i.i.d. quenched r.vs. 

The multilayer model is defined by a combination of RBM Hamiltonians:
$$
H_N(\s)=\sum_{x\in[\nu-1]} H^{(x)}_N(\s^{(x)},\s^{(x+1)})\,.
$$

The Gibbs posterior distribution for any $\b\geq0$ reads
\be
p_{\b}(\s;\xi,b):=\frac{\exp\left(-\b H_N(\s)+\sum_{x\in[\nu]} (b^{(x)},\s^{(x)})\right)}{\widehat E_{\s^{(1)}\ldots\s^{(\nu)}}\left[\exp\left(-\b H_N(\s)+\sum_{x\in[\nu]} (b^{(x)},\s^{(x)})\right)\right]}\nn\\
\ee
where 
we denote by $(\cdot,\cdot)$ the inner product of $\R^{d}$ (regardless of the dimension $d$). The free energy $A_N$ is defined as follows
\be
A_N(\b):=\frac1N\log \widehat E_{\s^{(1)}\ldots\s^{(\nu)}}\left[\exp\left(-\b H_N(\s)+\sum_{x\in[\nu]} (b^{(x)},\s^{(x)})\right)\right]\,.
\ee 

The main interest is in computing the free energy in the thermodynamic limit, that is
$$
N_1,\ldots,N_\nu\to\infty\quad\mbox{such that }\quad N_x/N\to\a_x \quad \forall x\in[\nu]\,. 
$$
We indicate this limiting procedure simply as $\lim_N$. 
The existence of the limit of the free energy is at the moment an open mathematical problem. Assuming it exists however it must be self-averaging, by the standard argument using concentration of Lipschitz functions \cite{tsirelson}. Therefore we can equivalently study the limit of the averages of $A_N$.

Central objects are the overlaps, \ie normalised inner products, of each couple of configurations of the $x$-th layer $\s^{(x)},\t^{(x)}$
$$
R^{(x)}=R^{(x)}(\s^{(x)},{\t}^{(x)}):=\frac{(\s^{(x)},{\t}^{(x)})}{N_x}\,.
$$

Taking two different points $\s,\t\in\{-1,1\}^N$ let us compute the covariance of the energy in terms of the overlaps (assume for simplicity $b^{(x)}=0$ for all $x\in[\nu]$). We have
$$
E[H_N(\s)H_N(\t)]=N\sum_{|x-y|=1} \a_x\a_yR^{(x)}R^{(y)}\,. 
$$
This quantity is not positive definite (the overlaps can take both positive and negative values), which is a major source of difficulties when comparing dRBMs to the Sherrington-Kirkpatrick model. As it will be clear later on, such a lack of positivity of the covariance of the Hamiltonian yields a non-convex variational principle for the free energy (see (\ref{eq:error}) and (\ref{eq:Err-minmax}) below).

For simplicity we present our argument for $\pm1$ Bernoulli priors. We expect the extension to bounded units to be straightforward and that further extensions to sub-Gaussian units are possible, but technically more involved. Indeed two generic sub-Gaussian units coupled by the Hamiltonian \eqref{eq:H} typically make the model ill defined for large $\b$. Therefore one has to regularise the posterior distribution as discussed in \cite{sferico}, and similarly also the interpolating factors introduced below (as in \cite{NN}). These regularisations introduce a number of technicalities, irrelevant for the main message we aim to give here.
Also, as long as the units are bounded, the choice of the Gaussian distribution for the random weights is not so stringent, as the same argument of \cite{univ} applies and universality of the free energy can be proven for a large class of independent random weights. Most interesting are extensions to dependent random weights \cite{mezard}, yet rather unexplored at the moment.

Also, we deal with small biases, \ie all the $b^{(x)}$ are bounded by a fixed constant (see Theorem \ref{TH:minmax}). Dealing with possibly large biases is in fact easier but it requires a procedure which is similar, yet not exactly the same as the one presented in the sequel. For clarity, we focus here only on the most interesting and challenging case. 

In what follows we will conveniently separate the layers into two disjoint subsets. To fix the ideas let us set $H:=2\N\cap[\nu]$ and $V:=(2\N-1)\cap[\nu]$ respectively the layers in an even or odd position in increasing order. Note that units within the layers in $V$ are conditionally independent w.r.t. $p_{\b}(\s;\xi,b)$ given the layers of $H$ (and viceversa), reflecting the aforementioned bipartite structure of the dRBM. 

Throughout $\eta$ (possibly labeled by one or more indices) denotes a $\mathcal N(b,\s)$ random variable, whose expectation is always denoted by $E_{b,\s}$, that is in the univariate case
$$
E_{b,\s}[f]=\int f(x)\frac{e^{-\frac{(x-b)^2}{2\s}}}{\sqrt{2\pi\s}}dx
$$
 (note that $\lim_{\s\to0}E_{b,\s}[f]=f(b)$ for sufficiently regular $f$). When different $\eta$s are i.i.d. $\mathcal N(b,\s)$ we will use a unique symbol $E_{b,\s}$ to denote their joint expectation.  
Averages on the $\mathcal N(0,1)$ quenched weights $\xi^{(x)}_{ij}$ are simply indicated by $E$.

\subsection{Main result and RS approximation}

The aim of this paper is to characterise the RS approximation of the free energy of deep RBMs in terms of a non-convex variational formula. The main motivation for that is to take a step towards the understanding of the free energy. Albeit its exact form will probably involve replica symmetry breaking, we believe that the structure described here should persist also in this more complicated scenario. For a further elaboration on that, see Conjecture \ref{CJ:main} below and related discussion.

The starting point of our considerations is a sum-rule for the free energy as the one obtained for the Sherrington-Kirkpatrick model by Guerra in \cite{sum-rules} and later developed in different contexts. Let $t\in[0,1]$, $q:=(q_1\,,\ldots\,,q_\nu)\in[0,1]^\nu$. Let further for $x\in{\nu}$ $\{\eta^x_j\}_{j\in[N_x]}$ be i.i.d. $\mathcal{N}(0,1)$. We set
\bea
H'_N(\s)&:=&- \sum_{x=1}^\nu \sqrt{\sum_{|y- x|=1}\a_y q_y}\sum_{i=1}^{N_x}\eta_i^x\s^{(x)}_i\,,\\
H_{N,t}&:=&\sqrt tH_N(\s)+\sqrt {1-t}H'_N(\s)\,.
\eea
We introduce the measures
\be\label{eq:meanvt}
\meanv{\cdot}_{t}:=\frac{\widehat E_{\s^{(1)},\ldots,\s^{(\nu)}}\left[ (\cdot)e^{-\b H_{N,t}}\right]}{ \widehat E_{\s^{(1)}\dots \s^{(\nu)}} [e^{-\b H_{N,t}}]}\,,\quad \meanv{\cdot}_{t=1}=:\meanv{\cdot}_{\b,b,N}\,.
\ee
and
\bea
\mathcal E_N[q]&:=&\sum_{|x-y|=1}\int_0^1dtE_{0,1}E\meanv{\left((R^{(x)}-q_x)(R^{(y)}-q_y)\right)}_{t}\label{eq:error}\\
\RS(q)&:=&\sum_{x=1}^\nu\a_xE_{0,1}\log\cosh\left(b^{(x)}+\b\eta\sqrt{\sum_{|y-x|=1}\a_yq_y}\right)\nn\\
&+&\frac{\b^2}{2}\sum_{|x-y|=1}(\a_x-\a_x q_x)(\a_y-\a_y q_y)\,.\label{eq:ARS}
\eea
Then we have the following sum rule for the free energy
\begin{lemma}\label{lemma:repr}
For every $q=(q_1,\dots, q_\nu)\in[0,1]^\nu$ we have
\be\label{eq:repr}
E[A_N(\b)]+\frac{\b^2}{2}\mathcal E_N[q]=\RS(q)\,.
\ee
\end{lemma}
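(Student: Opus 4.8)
The argument is a Guerra-type interpolation, and \eqref{eq:repr} is obtained by integrating the $t$-derivative of an auxiliary free energy. The plan is to set
\be
\phi(t):=\frac1N E_{0,1}E\log\widehat E_{\sigma^{(1)},\ldots,\sigma^{(\nu)}}\left[e^{-\beta H_{N,t}}\right],\qquad t\in[0,1],
\ee
and to evaluate its endpoints. At $t=1$ one has $H_{N,1}=H_N$ and the $\eta$-average is trivial, whence $\phi(1)=E[A_N(\beta)]$. At $t=0$ the interpolating Hamiltonian $H_{N,0}=H'_N$ is a sum of independent single-site external fields, so the tilted a priori measure factorises over the $N$ units and a one-site computation gives
\be
\phi(0)=\sum_{x=1}^\nu\alpha_x E_{0,1}\log\cosh\left(b_x+\beta\eta\sqrt{\textstyle\sum_{|y-x|=1}\alpha_yq_y}\right),
\ee
which is precisely the first line of $\RS(q)$. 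By the fundamental theorem of calculus $E[A_N(\beta)]=\phi(0)+\int_0^1\phi'(t)\,dt$, so it remains to identify $\int_0^1\phi'(t)\,dt$ with the second line of $\RS(q)$ minus $\frac{\beta^2}{2}\mathcal E_N[q]$.

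The next step is to differentiate. Using
\be
\frac{\der}{\der t}(-\beta H_{N,t})=\frac{1}{2\sqrt t}\,(-\beta H_N)-\frac{1}{2\sqrt{1-t}}\,(-\beta H'_N),
\ee
and moving the derivative inside the logarithm, one gets $\phi'(t)=\frac1N E_{0,1}E\meanv{\frac{\der}{\der t}(-\beta H_{N,t})}_{t}$. To the $H_N$-term I apply Gaussian integration by parts in the quenched weights $\xi^{(x)}_{ij}$, and to the $H'_N$-term in the biases $\eta^x_i$. Because these Gaussians enter the tilted measure with the prefactors $\sqrt t$ and $\sqrt{1-t}$ respectively, the singular factors $\tfrac{1}{2\sqrt t}$ and $\tfrac{1}{2\sqrt{1-t}}$ cancel and a factor $\beta^2$ is produced. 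Using $(\sigma^{(x)}_i)^2=1$, the single-replica (self-overlap) contributions collapse to constants, while the two-replica contributions are reassembled through $\sum_{i\in[N_x]}\sigma^{(x)}_{i,1}\sigma^{(x)}_{i,2}=N_xR^x_{12}$ into products of layer overlaps weighted by $\alpha_x\alpha_y$.

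The decisive point is the algebraic reorganisation of these terms. The variance profile of $H'_N$ has been chosen to be exactly $\sum_{|y-x|=1}\alpha_yq_y$ so that the quadratic contribution $\propto\meanv{R^x_{12}R^y_{12}}_t$ coming from $H_N$ and the linear contribution $\propto q_y\meanv{R^x_{12}}_t$ coming from $H'_N$ combine, after symmetrising over each adjacent pair, to complete the square; the outcome is
\be
\phi'(t)=\frac{\beta^2}{2}\sum_{|x-y|=1}(\alpha_x-\alpha_xq_x)(\alpha_y-\alpha_yq_y)-\frac{\beta^2}{2}\sum_{|x-y|=1}\alpha_x\alpha_y\,E_{0,1}E\meanv{(R^x_{12}-q_x)(R^y_{12}-q_y)}_{t}.
\ee
The first sum is $t$-independent, while integrating the second over $t\in[0,1]$ reproduces $\frac{\beta^2}{2}\mathcal E_N[q]$ (the layer weights $\alpha_x\alpha_y$ being those dictated by the normalisation $R^x=(\sigma^x,\tau^x)/N_x$). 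Inserting this into $E[A_N(\beta)]=\phi(0)+\int_0^1\phi'(t)\,dt$ and rearranging yields \eqref{eq:repr}.

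I expect the integration by parts to be routine and the genuine obstacle to be this final reorganisation. The subtlety is entirely combinatorial: every internal layer is coupled to two neighbours, so one must track the neighbour sums and the symmetrisation between the bond and the ordered-pair conventions carefully, in order to certify that the completed square is exactly $\meanv{(R^x_{12}-q_x)(R^y_{12}-q_y)}_t$ and that the residual constant matches the second line of $\RS(q)$ with the correct weights and no spurious factor of two.
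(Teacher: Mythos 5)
Your proposal is correct and follows essentially the same route as the paper: the same interpolating functional $\phi_N(t)=\frac1N E_{0,1}E\log Z_N(t;\xi,\eta)$, the same endpoint evaluations, Gaussian integration by parts in both the quenched weights $\xi$ and the interpolating fields $\eta$, and the same completion of the square yielding $\phi_N'(t)=\frac{\b^2}{2}\sum_{|x-y|=1}\a_x\a_y(1-q_x)(1-q_y)-\frac{\b^2}{2}\sum_{|x-y|=1}\a_x\a_y\meanv{(R^{(x)}-q_x)(R^{(y)}-q_y)}_t$. The combinatorial bookkeeping over adjacent layer pairs that you flag as the main subtlety is handled in the paper exactly as you describe, by working throughout with the symmetric bond sum $\sum_{|x-y|=1}$.
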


The proof is quite standard. For the model under consideration it can be found in \cite{min1,min2}, but we give it anyway at the end of the paper for completeness. 

The main contribution of this note is the following
\begin{theorem}\label{TH:minmax}
Assume $|b^{(x)}|\leq \frac12\log\sqrt{2+\sqrt 3}$ for any $x\in[\nu]$. Then
$\RS(q)$ has a unique stationary point $\bar q\in[0,1]^\nu$ satisfying 
\be\label{eq:fixpoints}
\a_{x-1}q_{x-1}+\a_{x+1}q_{x+1}=E_{b^{(x)},\b^2\a_xq_x}[\tanh^2(\eta)]\,,\quad x=1\ldots,[\nu]\,,
\ee
and it holds
\be\label{eq:RSminmax}
\RS(\bar q)=\min_{q_x\,:\,x\in V}\max_{q_x\,:\,x\in H} \RS(q_1,\dots, q_\nu)\,.
\ee
\end{theorem}

This $\min\max$ formula marks an important difference between dRBMs and other better understood spin glass models, such as the Sherrigton-Kirkpatrick model, for which the free energy (and its RS approximation) is given by a convex variational principle. 

In view of (\ref{eq:repr}) the last theorem can be reformulated as follows: there is a $\bar q\in[0,1]^\nu$ independent on $N$ such that 
\be\label{eq:Err-minmax}
\mathcal E_N[\bar q]=\min_{q_x\,:\,x\in V}\max_{q_x\,:\,x\in H} \mathcal E_N[(q_1,\dots, q_\nu)]\,. 
\ee

Theorem \ref{TH:minmax} leads us to the following conjecture
\begin{definition}\label{CJ:main}
Assume there is $\hat q=(\hat q_1,\ldots,\hat q_{\nu})\in[0,1]^\nu$ such that for some choice of the parameters $(\a_1,\ldots,\a_\nu,\b,b^{(1)},\ldots,b^{(\nu)})\in\Omega_{RS}\subseteq(0,1)^{\times \nu}\times(0,\infty)\times \R^\nu$
\be\label{eq:RS-conj1}
\lim_N E\meanv{(\hat q^{(x)}-R^{(x)})^2}_{\b,b,N}=0\,,\quad \forall x\in[\nu]\,.
\ee
Then $\hat q= \bar q$ ($\bar q$ has been introduced in Theorem \ref{TH:minmax}).
\end{definition}
Equivalently we are conjecturing that if the overlaps are self-averaging to some value $\bar q$ then
\be\label{eq:RS-conj2}
\lim_N A_N=\min_{q_x\,:\,x\in V}\max_{q_x\,:\,x\in H} \RS(q_1,\dots, q_\nu)=\RS(\bar q_1,\dots, \bar q_\nu)
\ee
\ie the optimum is attained in $\bar q$. Note that there is at least one case in which the conjecture in this latter form can be a posteriori verified to be true. Indeed if $b^{(x)}=0$ for all $x\in[\nu]$, we can check that $\bar q=0$ satisfies (\ref{eq:fixpoints}). In this case $\RS(0)$ reduces to the annealed free energy and (\ref{eq:RS-conj2}) holds in a certain region of parameters \cite[Theorem 4.1]{min2}; one expects that the self-averaging of the overlaps around zero can be proved in this region by Talagrand's exponential inequalities \cite{Tal}. We also mention the work \cite{min3}, in which the authors recover the free energy (\ref{eq:RSminmax}) on the Nishimori line, which enforces replica symmetry. 

The rest of the paper is organised as follows. We finish the introduction discussing the connection of our work with the existing literature. In Section \ref{sect:GL} we prove a number of auxiliary statements, which basically extend the so-call Guerra-Latala lemma. Then the proof of Theorem \ref{TH:minmax} is given in the Section \ref{sect:proof}. We generalise the optimisation procedure earlier introduced in \cite{NN,bip} to achieve the $\min\max$ formula for the RS free energy of respectively Gaussian-Bernoulli and Bernoulli-Bernoulli RBMs. This optimisation is tricky: after a convenient change of coordinates, we avoid dealing with the Hessian of $\RS(q)$, which turns out to be complicated, and proceed with an iterative nested optimisation of one variable at a time. To do so we will crucially make use of some extensions of the so-called Guerra-Latala Lemma \cite{sum-rules,Tal}. 

\subsection{Related literature}

As already mentioned, a $\min\max$ formula for the RS free energy of RBMs was derived in \cite{NN,bip} for Gauss-Bernoulli and Bernoulli-Bernoulli priors. Interesting enough, for RBMs with spherically symmetric sub-Gaussian priors, albeit their free energy admits also a $\min\max$ formulation \cite{iosfer}, a fully convex variational principle has been proven \cite{ACh, baik, sferico}. Also in problems with similar mathematical settings such as high dimensional linear inference, in which RS is enforced by the Nishimori line, a $\min\max$ formula for the free energy can be achieved by similar methods (see e.g. \cite{macris-rev}) and holds also in the spherical case \cite{macris-sf}. Spherical models are special as they are typically RS and the prior allows diagonalisation of the energy (in the sense of principal values), so that a random matrix type analysis can be performed; we do not known whether this technical advantage is the sole responsible for convexity. 

The mathematical study of multi-layered model initiated with the papers \cite{BCMT} and \cite{pan}. In \cite{BCMT} the authors analysed the Hamiltonian
$$
(\ref{eq:H})+\sum_{x\in[\nu]} \b^{(x)} H^{(x)}_{SK}\,,
$$
where $H^{(x)}_{SK}$ denotes the Sherrington-Kirkpatrick Hamiltonian as a function of the $x$-th layer and $\b^{(x)}\geq0$. They proved an upper bound for the free energy under the assumption that $\b^{(1)},\ldots,\b^{(\nu)}$ are large enough to enforce the positivity of the covariance of the energy in terms of the overlaps. A lower bound was proved in \cite{pan}, independently on this assumption.
These bounds are formulated in terms of a Parisi-like variational principle (therefore convex) and they match for the models studied in \cite{BCMT}. This caused a wide-spread belief that the lower bound of \cite{pan} should be optimal also in the non-convex case, supported by the results for the RS approximation for the two layer case \cite{parisi, bip}, but the question remains open. 

The particular architecture of dRBMs (for Bernoulli priors) has been recently investigated in \cite{min1,min2} in which the authors are able to characterise the high temperature region and to generalise to dRBMs the lower bound of \cite{bip, NN?} for the free energy (notably these are of the same type of the one of \cite{pan}). Furthermore they argue that replica symmetry is identified by a set of equations that we show here to determine the saddle point of $\RS(q)$. In this sense our analysis extends and complements the work begun in \cite{min1,min2}. 

Finally we comment on the RS characterisation given here. A precise mathematical description of the RS phase has been one principal object of study at the early stage of the theory of disordered system, especially for the Sherrington-Kirkpatrick model in the attempt to disprove the replica symmetric formula of the free energy. The paradigm was to prove the following statement: {\em if the overlap is self-averaging then the free energy is equal to its RS expression}. To prove that, the main point seems to be to select the right sequence of finite dimensional Gibbs measure w.r.t. which the overlap should be asymptotically self-averaging. In \cite{pash} the authors proved the statement working with a Gibbs measure biased by a very specific random vector vanishing as $N^{-\frac14}$. This result was later revisited in \cite{sh} and in \cite{talPTRF}: in either paper it is considered a Gibbs measure perturbed according to the so-called cavity method; the latest version of this proof can be found in \cite[Section 1.6]{Tal}. 
Finally from the approach of \cite{sum-rules} it can be easily proven that if the interpolation error, \ie the variance of the overlap w.r.t. the interpolating Gibbs measure, vanishes in some point, then the $\min$ of the RS functional must be attained only in its zero. This allows us to conclude that such a self-averaging property of the overlap implies the RS formula for the free energy. 
Once again for dRBMs the crucial difference is in their non-convex structure, which complicates things further, as the interpolation error is not positive definite. 

\subsection{Acknowledgements} This paper benefited greatly from the observations of an anonymous referee, who is gratefully acknowledged.

\section{Guerra-Latala lemmas}\label{sect:GL}

The proof of our main theorem will heavily use the so-called Guerra-Latala lemma \cite{sum-rules}\cite{Tal}. In the next two lemmas we follow essentially \cite[A.14]{Tal}, albeit with a different presentation, obtaining somewhat more general results.

We begin by the following formula, holding for instance for any $f\in C^2(\R)$:
\be\label{eq:derivata-sigma}
\frac{d}{d\s}E_{b,\s}[f]=\frac{E_{b,\s}[(\eta-b)f'(\eta)]}{2\s}\,. 
\ee
The proof is simple:
\be\label{eq:proof-der-sigma}
\frac{d}{d\s}E_{b,\s}[f(\eta)]=\frac{d}{d\s}E_{0,1}[f(\eta\sqrt\s+b)]
=\frac{1}{2\sqrt\s}E_{0,1}[\eta f'(\eta\sqrt\s+b)]=\frac{E_{b,\s}[(\eta-b)f'(\eta)]}{2\s}\,.
\ee
Let us now assume $\Phi\in C^2(\R)$, odd, strictly increasing, with $E_{0,1}[|\Phi|]$ bounded, $\Phi(0)=0$ and $x\Phi''(x)<0$ for $x\neq0$. We also assume that there is a number $L>0$ such that $\Phi^2$ is convex for $|x|\leq L$. Later on we will specialise $\Phi(x)=\tanh(x)$, which clearly enjoys these properties (with $L=\log\sqrt{2+\sqrt 3}$). 

We set for any $b\in\R$, $\s>0$
\be\label{eq:c}
c_{b,\s}[\Phi]:=\frac{E_{b,\s}[(\eta-b)\Phi(\eta)\Phi'(\eta)]}{E_{b,\s}[\Phi^2]}\,. 
\ee

\begin{lemma}\label{lemma:c}
Let $\Phi$ and $L>0$ defined as above. Then $c_{b,\s}[\Phi]\in(0,1)$ for all $|b|\leq L/2$.
\end{lemma}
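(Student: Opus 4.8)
The plan is to read $c_{b,\sigma}[\Phi]$ as a logarithmic derivative. Applying the identity \eqref{eq:derivata-sigma} with $f=\tfrac12\Phi^2$ gives $E_{b,\sigma}[(\eta-b)\Phi\Phi']=\sigma\,\frac{d}{d\sigma}E_{b,\sigma}[\Phi^2]$, so that, setting $a(\sigma):=E_{b,\sigma}[\Phi^2]$,
\[
c_{b,\sigma}[\Phi]=\sigma\,\frac{d}{d\sigma}\log a(\sigma).
\]
Since $\Phi$ is strictly increasing, hence non-constant, $a(\sigma)>0$ and the denominator in \eqref{eq:c} is never an issue; the whole statement is then equivalent to the two monotonicity claims $0<\sigma a'(\sigma)<a(\sigma)$, i.e. that $a$ is strictly increasing while $\sigma\mapsto a(\sigma)/\sigma$ is strictly decreasing. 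Equivalently, a single Gaussian integration by parts rewrites the numerator as $\sigma\,E_{b,\sigma}[(\Phi')^2+\Phi\Phi'']$, which is the form best suited to the estimates below.

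The engine of both bounds is a pointwise inequality. Because $\Phi(0)=0$ and $x\Phi''(x)<0$, the function $\Phi$ is strictly concave on $(0,\infty)$ and strictly convex on $(-\infty,0)$; comparing $\Phi$ with its tangent at a point $x>0$ and using $\Phi(0)=0$ gives $0<x\Phi'(x)<\Phi(x)$, and multiplying by $\Phi(x)>0$ yields
\[
0<x\,\Phi(x)\,\Phi'(x)<\Phi(x)^2,\qquad x\neq0,
\]
the case $x<0$ being identical after a sign flip. First I would record the consequence for the centred case: when $b=0$ the factor $(\eta-b)$ is just $\eta$, the inequality integrates directly against the symmetric law $\mathcal N(0,\sigma)$, and one gets at once $0<E_{0,\sigma}[\eta\Phi\Phi']<E_{0,\sigma}[\Phi^2]$, i.e. $c_{0,\sigma}[\Phi]\in(0,1)$. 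This already pins down the value entering the zero-bias check in Theorem \ref{TH:minmax}.

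The real difficulty is the bias. Writing $N:=E_{b,\sigma}[(\eta-b)\Phi\Phi']=E_{b,\sigma}[\eta\Phi\Phi']-b\,E_{b,\sigma}[\Phi\Phi']$, the first term lies in $(0,a)$ by the pointwise inequality, but the centring introduces the extra term $b\,E_{b,\sigma}[\Phi\Phi']$ whose sign is not dictated pointwise. I would control it through the substitution $\eta=b+\sqrt\sigma\,z$ with $z\sim\mathcal N(0,1)$ and a symmetrization $z\to-z$, exploiting that $\Phi'$ is a unimodal bump centred at the origin (a consequence of $x\Phi''<0$), so that $\Phi\Phi'$ has the sign of its argument. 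This is exactly where the announced extension of the Guerra--Latala lemma enters. The upper bound $c<1$, equivalent to strict monotonicity of $a(\sigma)/\sigma$, is the more robust half and should follow from the pointwise inequality once the sign of $b\,E_{b,\sigma}[\Phi\Phi']$ is tracked; the step I expect to be the main obstacle is the lower bound $c>0$, i.e. strict monotonicity of $a(\sigma)$ for $b\neq0$, where the $\Phi\Phi''\le 0$ contribution competes against $(\Phi')^2$ and positivity holds only after averaging, so that the delicate point is precisely to show that this Gaussian average stays positive uniformly in the bias.
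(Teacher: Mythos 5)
Your proposal is incomplete in a way that matters: it proves the lemma only for $b=0$ and explicitly defers both halves of the biased case, which is the entire content of the statement. For $c_{b,\sigma}[\Phi]<1$ you reduce everything to the sign of $b\,E_{b,\sigma}[\Phi\Phi']$ but leave that sign ``to be tracked''; for $c_{b,\sigma}[\Phi]>0$ you state outright that the delicate point is to show positivity uniformly in the bias, and give no argument. The first omission is easily repaired, and along exactly the route you sketch (this is also the paper's route): since $\Phi\Phi'$ is odd, a change of measure and symmetrisation give
\begin{equation}
b\,E_{b,\sigma}[\Phi(\eta)\Phi'(\eta)]
= b\,e^{-\frac{b^2}{2\sigma}}\,E_{0,\sigma}\!\left[\Phi(\eta)\Phi'(\eta)\,\sinh\!\left(\tfrac{b\eta}{\sigma}\right)\right]\;\geq\;0\,,
\end{equation}
because $\Phi(\eta)\Phi'(\eta)$ and $b\sinh(b\eta/\sigma)$ both have the sign of $\eta$, so the integrand is pointwise non-negative. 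Combined with your (correct, and identical to the paper's) pointwise inequality $\eta\Phi\Phi'<\Phi^2$, this yields $c_{b,\sigma}[\Phi]<1$ for all $b$.

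The second omission cannot be repaired, and you should know that the obstacle you flagged is not merely delicate --- it is fatal. The paper's proof of $c_{b,\sigma}[\Phi]>0$ symmetrises the numerator and integrates by parts to reach (up to constants) $\sigma e^{-b^2/(2\sigma)}E_{0,\sigma}[((\Phi')^2+\Phi\Phi'')\cosh(b\eta/\sigma)]$, and then drops the $\cosh$ weight to reduce to the centred case $E_{0,\sigma}[\eta\Phi\Phi']>0$. But $(\Phi')^2+\Phi\Phi''$ is not of one sign (for $\Phi=\tanh$ it equals $\cosh^{-2}(\eta)\,(1-3\tanh^2\eta)$), so $\cosh\geq1$ gives no pointwise bound there, and in fact the conclusion fails: Stein's identity rewrites the numerator exactly as $\sigma\,E_{b,\sigma}[(\Phi')^2+\Phi\Phi'']$, which for $\Phi=\tanh$, fixed $b$ with $\tanh^2 b>1/3$, and $\sigma\to0$ behaves like $\sigma\cosh^{-2}(b)\,(1-3\tanh^2 b)<0$. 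So the quantity you hoped to show positive uniformly in the bias is genuinely negative in that regime (equivalently, $\sigma\mapsto E_{b,\sigma}[\tanh^2]$ is not increasing for large $|b|$); only the $c<1$ half --- the classical Latala--Guerra statement --- holds uniformly in $b$. In short: your write-up as it stands establishes the easy centred case and, modulo the missing $\sinh$ argument, the upper bound; the lower bound you postponed is precisely where the paper's own proof breaks down, so deferring it left the real issue untouched.
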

\begin{proof}
First we show the upper bound, in fact without any restriction on $b$. We use that $\Phi(z)\Phi'(z)$ is odd to have
\bea
bE_{b,\s}[\Phi(\eta)\Phi'(\eta)]&=&\frac b2 \left(E_{b,\s}[\Phi(\eta)\Phi'(\eta)]-E_{-b,\s}[\Phi(\eta)\Phi'(\eta)]\right)\nn\\
&=&be^{-\frac{b^2}{2\s}}E_{b,\s}[\sinh (\frac{b}{\s}\eta)\Phi(\eta)\Phi'(\eta)]\geq0\,,\nn
\eea
uniformly in $b$. Then
$$
c_{b,\s}[\Phi]\leq\frac{E_{b,\s}[\eta\Phi(\eta)\Phi'(\eta)]}{E_{b,\s}[\Phi^2]}\,. 
$$
Now we show that 
$$
E_{b,\s}[\Phi^2]>E_{b,\s}[\eta\Phi(\eta)\Phi'(\eta)]\,. 
$$
We study the function $x\mapsto x\Phi'(x)-\Phi(x)$ for $x\in\R$. We note that it vanishes at the origin and it decreases for $x\neq0$. So it must be $x\Phi'(x)-\Phi(x)>0$ for $x<0$ and $x\Phi'(x)-\Phi(x)<0$ for $x>0$, whence
$$
\Phi(x)(x\Phi'(x)-\Phi(x))<0\,,\quad \forall \,\,x\neq0\,. 
$$

For the lower bound on $c_{b,\s}[\Phi]$ we consider for definiteness $b>0$. As $\Phi(\eta)\Phi'(\eta)$ is an odd function then $\eta\Phi(\eta)\Phi'(\eta)$ is even, thus we have
\bea
E_{b,\s}[(\eta-b)\Phi(\eta)\Phi'(\eta)]&=&e^{-\frac{b^2}{2\s}}E_{0,\s}[(\eta-b)\Phi(\eta)\Phi'(\eta)e^{\frac{b\eta}{\s}}]\nn\\
&=&\frac{e^{-\frac{b^2}{2\s}}}{2}E_{0,\s}[(\eta-b)\Phi(\eta)\Phi'(\eta)e^{\frac{b\eta}{\s}}]+\frac{e^{-\frac{b^2}{2\s}}}{2}E_{0,\s}[(-\eta-b)\Phi(-\eta)\Phi'(-\eta)e^{-\frac{b\eta}{\s}}]\nn\\
&=&e^{-\frac{b^2}{2\s}}E_{0,\s}[\Phi(\eta)\Phi'(\eta)(\eta\cosh(\frac{\eta b}{\s}))-b\sinh(\frac{\eta b}{\s}))]\nn\\
&=&2e^{-\frac{b^2}{2\s}}E_{0,\s}[1_{\{\eta\geq0\}}\Phi(\eta)\Phi'(\eta)\cosh(\frac{\eta b}{\s})(\eta-b\tanh(\frac{\eta b}{\s}))]\nn\,\\
&=&\sum_{s=\pm1}E_{sb,\s}[1_{\{\eta\geq0\}}\Phi(\eta)\Phi'(\eta)(\eta-b\tanh(\frac{\eta b}{\s}))]\label{eq:above}\,.
\eea
Let $\eta^*$ denote the unique non-zero solution of $\eta=b\tanh(\frac{\eta b}{\s})$. It is easy to see that the origin is the unique solution of such equation iff $\s\geq b^2$. Moreover if $\s\leq b^2$ the function $\eta^*(\s)\geq0$ is decreasing (as can be seen by implicit differentiation) from $\eta^*(0)=b$ to zero. 
Therefore the integrand in \eqref{eq:above} is negative for $\eta<\eta^*$, otherwise it is positive. So if $\s\geq b^2$ then $\eta\geq b\tanh(\frac{\eta b}{\sqrt \s})$ and the whole expression \eqref{eq:above} is positive. 

Assuming now $\s\in[0,b^2]$, we split
\bea
E_{sb,\s}[1_{\{\eta\geq0\}}\Phi(\eta)\Phi'(\eta)(\eta-b\tanh(\frac{\eta b}{\s}))]&=&-E_{sb,\s}[1_{\{0\leq\eta<\eta^*\}}\Phi(\eta)\Phi'(\eta)|\eta-b\tanh(\frac{\eta b}{\s})|]\nn\\
&+&E_{sb,\s}[1_{\{\eta\geq\eta^*\}}\Phi(\eta)\Phi'(\eta)|\eta-b\tanh(\frac{\eta b}{\s})|]\,. 
\eea

We will use the estimate
$$
b\tanh(\frac{\eta b}{\s})\leq \eta^*+\frac{b^2-{\eta^*}^2}{\s}(\eta-\eta^*)
$$
coming from Taylor expansion. We have
\bea
&&E_{sb,\s}[1_{\{0\leq\eta<\eta^*\}}\Phi(\eta)\Phi'(\eta)|\eta-b\tanh(\frac{\eta b}{\s})|]\nn\\
&\leq&\left(1-\frac{b^2-{\eta^*}^2}{\s}\right)\Phi(\eta^*)\Phi'(\eta^*)E_{sb,\s}[1_{\{0\leq\eta<\eta^*\}}(\eta^*-\eta)]\label{eq:BBB1}
\eea
and
\bea
&&E_{sb,\s}[1_{\{\eta\geq\eta^*\}}\Phi(\eta)\Phi'(\eta)|\eta-b\tanh(\frac{\eta b}{\s})|]\nn\\
&\geq&E_{sb,\s}[1_{\{\eta^*\leq\eta\leq 2b\}}\Phi(\eta)\Phi'(\eta)|\eta-b\tanh(\frac{\eta b}{\s})|]\nn\\
&\geq&\left(1-\frac{b^2-{\eta^*}^2}{\s}\right)\Phi(\eta^*)\Phi'(\eta^*)E_{sb,\s}[1_{\{\eta^*\leq\eta\leq 2b\}}(\eta-\eta^*)]\,.\label{eq:BBB2}
\eea
We used in (\ref{eq:BBB1}) ansd \eqref{eq:BBB2} that $\Phi\Phi'$ is positive and increasing for $x\in[0,2b]$ (since $b<L/2$); thus
\bea
&&E_{b,\s}[1_{\{\eta\geq0\}}\Phi(\eta)\Phi'(\eta)(\eta-b\tanh(\frac{\eta b}{\s}))]\nn\\
&\geq& \left(1-\frac{b^2-{\eta^*}^2}{\s}\right)\Phi(\eta^*)\Phi'(\eta^*)E_{b,\s}[1_{\{0\leq\eta\leq 2b\}}(\eta-\eta^*)]\nn\\
&=& \left(1-\frac{b^2-{\eta^*}^2}{\s}\right)\Phi(\eta^*)\Phi'(\eta^*)\left(E_{b,\s}[1_{\{0\leq\eta\leq 2b\}}(\eta-b)]+(b-\eta^*)E_{b,\s}[1_{\{0\leq\eta\leq 2b\}}]\right)\nn\\
&=& \left(1-\frac{b^2-{\eta^*}^2}{\s}\right)\Phi(\eta^*)\Phi'(\eta^*)(b-\eta^*)E_{b,\s}[1_{\{0\leq\eta\leq 2b\}}]>0\label{eq:maggioredizero}\,.
\eea
On the other hand, neglecting completely the positive part, we have by (\ref{eq:BBB1})
$$
E_{-b,\s}[1_{\{\eta\geq0\}}\Phi(\eta)\Phi'(\eta)(\eta-b\tanh(\frac{\eta b}{\s}))]\geq -\left(1-\frac{b^2-{\eta^*}^2}{\s}\right)\Phi(\eta^*)\Phi'(\eta^*)\sqrt2e^{-\frac{b^2}{4\s}}\eta^*\,. 
$$
Combining with \eqref{eq:above} and \eqref{eq:maggioredizero} we get
\bea
&&\sum_{s=\pm1}E_{sb,\s}[1_{\{\eta\geq0\}}\Phi(\eta)\Phi'(\eta)(\eta-b\tanh(\frac{\eta b}{\s}))]\nn\\
&\geq& \left(1-\frac{b^2-{\eta^*}^2}{\s}\right)\Phi(\eta^*)\Phi'(\eta^*)\left((b-\eta^*)E_{b,\s}[1_{\{0\leq\eta\leq 2b\}}] -\sqrt2e^{-\frac{b^2}{4\s}}\eta^*\right)\nn\\
&=& \!\!\!\left(1-\frac{b^2-{\eta^*}^2}{\s}\right)\Phi(\eta^*)\Phi'(\eta^*)\left((b-\eta^*)\left(E_{0,1}[1_{\{|\eta|\leq \frac{b}{\sqrt\s}\}}] +\sqrt2e^{-\frac{b^2}{4\s}}\right)-b\sqrt2e^{-\frac{b^2}{4\s}}\right).\,\,\label{eq:expressionaboveultima}
\eea
Noting that $b-\eta^*=b(1+e^{\frac{2b\eta^*}{\s}})^{-1}$ it is immediate to verify that the expression \eqref{eq:expressionaboveultima} is non-negative for all $\s\in[0,b^2]$, which ends the proof.
\end{proof}

For any $f\in C^2(\R^+)$ positive and increasing, by (\ref{eq:derivata-sigma}), (\ref{eq:c}) and Lemma \ref{lemma:c} we have 
\be\label{eq:Phi-increasing}
\frac{d}{dx}E_{b,f(x)}[\Phi^2] = f'(x)\frac{E_{b,f(x)}[(\eta-b)\Phi(\eta)\Phi'(\eta)]}{f(x)}=f'(x)\frac{c_{b,f(x)}[\Phi]E_{b,f(x)}[\Phi^2]}{f(x)}>0\,,
\ee
\ie $E_{b,f(x)}[\Phi^2]$ is increasing w.r.t. $x$ for any $|b|\leq L/2$. We will use much this fact in the sequel. 

\begin{lemma}[Guerra-Latala]\label{lemma:GLur}
Let $b\in\R$ and $f,g$ be differentiable, positive and increasing.
The function $\Psi\,:\,\R^+\mapsto \R^+$ defined by
\be\label{eq:GL-psi}
\Psi(x):=\frac{E_{b,f(x)}[\Phi^2]}{g(x)}
\ee
is decreasing for those $x$ such that
\be\label{eq:cond-fg}
g'(x)f(x)\geq c_{b,f(x)}[\Phi]f'(x)g(x)\,. 
\ee
Otherwise it is increasing. 
\end{lemma}

\begin{proof}
The assertion follows readily from 
\be\label{eq:GLmain}
\Psi'(x)=\Psi(x)\left(\frac{f'(x)c_{b,f(x)}[\Phi]}{f(x)}-\frac{g'(x)}{g(x)}\right)\,.
\ee
To prove it, we compute by \eqref{eq:derivata-sigma}
$$
(g\Psi)'=\frac{f'}{f}\E_{b,f(x)}[(\eta-b)\Phi(\eta)\Phi'(\eta)]\,. 
$$
Therefore by the Leibniz rule
\bea
g(x)\Psi'(x)&=&\frac{f'(x)}{f(x)}\E_{b,f(x)}[(\eta-b)\Phi(\eta)\Phi'(\eta)]-\frac{g'(x)}{g(x)}E_{b,f(x)}[\Phi^2]\nn\\
&=&E_{b,f(x)}[\Phi^2]\left(\frac{f'(x)c_{b,f(x)}[\Phi]}{f(x)}-\frac{g'(x)}{g(x)}\right)\,,
\eea
which proves (\ref{eq:GLmain}). 
\end{proof}

The usual Guerra-Latala result corresponds to the special case $f(x)=g(x)=x$. In turn the simpler bound 
\be\label{eq:simple-bound}
g'f\geq f'g\,
\ee
will be often sufficient to make $\Psi$ decreasing thanks to Lemma \ref{lemma:c}. This amounts to assume $f/g$ non-increasing. We state for clarity this simplified version of the above lemma as follows:

\begin{lemma}\label{lemma:GL}
Let $|b|\leq L/2$ and $f,g$ be differentiable, positive and increasing such that (\ref{eq:simple-bound}) holds. 
Then the function $\Psi$ defined by (\ref{eq:GL-psi}) is decreasing.
\end{lemma}

The following cases are not covered by the previous lemmas.
\begin{lemma}\label{lemma:utile0}
Let $G\,:\,\R^+\mapsto\R$ be continuously differentiable and increasing. Assume $|b|\leq L/2$ and that there is a unique $(x_0,y_0)$ such that
\be\label{eq-dini}
G(x_0)=E_{b,y_0}[\Phi^2]\,. 
\ee
Then there is a function $\bar y$ uniquely defined by
\be\label{eq:utile1bis}
G(x)-E_{b,\bar y(x)}[\Phi^2]=0\,.
\ee
Moreover $\bar y$ and $\bar y/G$ are increasing.
\end{lemma}
\begin{proof}
$G(x)$ is increasing and $-E_{b,y}[\Phi^2]$ is decreasing. Therefore the implicit function theorem applies, giving the existence of a differentiable function $\bar y(x)>0$, uniquely defined in a neighbourhood of $(x_0, y_0)$ by
\be\label{eq;via}
G(x)=E_{b,\bar y(x)}[\Phi^2]\,.
\ee
Then $\bar y(x)$ extends globally as a function $\R^+\mapsto \R^+$ via (\ref{eq;via}) since the intersection point $(x_0, y_0)$ is unique. We compute the derivative via
$$
\bar y'(x)=\frac{yG'(x)}{E_{b,y}[(\eta-b)\Phi'\Phi]}\Big|_{y=\bar y}=\frac{\bar yG'(x)}{c_{b,\bar y}[\Phi]G(x)}>0\,,
$$
where we used (\ref{eq:Phi-increasing}). Thus $\bar y$ is differentiable, positive and increasing as function of $x$. Then, considering
\be
\frac{G(x)}{\bar y(x)}=\frac{E_{b,\bar y}[\Phi^2]}{\bar y(x)}
\ee
we see that the r.h.s. is decreasing by Lemma \ref{lemma:GL} (with $f=g=\bar y$). Thereby $\bar y(x)/G(x)$ is increasing.
\end{proof}

\begin{lemma}\label{lemma:utile1bis}
Let $F\,:\,\R^+\mapsto\R$ be non-negative, continuously differentiable, with $F'(y)y\geq F(y)$. Assume $|b|\leq L/2$ and that there is a unique $(x_0,y_0)$ such that
$$
F(y_0)=E_{b,y_0+x_0}[\Phi^2]\,. 
$$
Then there is a unique function $\bar y$ defined by
\be\label{eq:utile1bis}
F(y)-E_{b,y+x}[\Phi^2]=0\,,\quad x\geq0\,.
\ee
The function $\bar y$ is increasing and $\bar y(x)/x$ is decreasing. Moroever
\be\label{eq:moreover}
\partial_y(F(y)-E_{b,y+x}[\Phi^2])\Big |_{y=\bar y}>0\,.
\ee
\end{lemma}

\begin{proof}
By (\ref{eq:Phi-increasing}) $E_{b,y+x}[\Phi^2]$ is increasing in $x$ at fixed $y$ and in $y$ at fixed $x$. By Lemma \ref{lemma:GL} (with $f=y+x$ and $g=y$) we see that 
$$
\frac{E_{b,y+x}[\Phi^2]}{y}
$$
is decreasing as a function of $y$. Therefore $E_{b,y+x}[\Phi^2]$ is increasing and concave, while by assumption $F(y)$ is increasing and convex. Thus the function
$$
\frac{F(y)}{y}-\frac{E_{b,y+x}[\Phi^2]}{y}
$$
is increasing. Therefore
\be
\partial_y(F(y)-E_{b,y+x}[\Phi^2])\Big|_{(x,y)=(x_0,y_0)}>\frac{F(y)}{y}-\frac{E_{b,y+x}[\Phi^2]}{y}\Big|_{(x,y)=(x_0,y_0)}=0\,.
\ee
Hence by the implicit function theorem we can define a function $\bar y$ locally around $x_0$ such that $\bar y(x_0)=y_0$ and (\ref{eq:moreover}) holds.
Combining with \eqref{eq:Phi-increasing} we get
\be
\bar y'(x)=\frac{\partial_x E_{b,y+x}[\Phi^2]}{\partial_y(F(y)-E_{b,y+x}[\Phi^2])}\Big |_{y=\bar y}>0\,,
\ee
so $\bar y$ is increasing in its domain. Moreover computing explicitly by \eqref{eq:derivata-sigma} and using Lemma \ref{lemma:c} yields
\bea
\bar y'(x)&=&\frac{c_{b,\bar y+x}[\Phi]F(\bar y)}{F'(\bar y)x+F'(\bar y)y-c_{b,\bar y+x}[\Phi]F(\bar y)}\nn\\
&=&\frac{c_{b,\bar y+x}[\Phi]F(\bar y)}{F'(\bar y)x+(1-c_{b,y+x}[\Phi])F(\bar y)}\nn\\
&<& \frac{F(\bar y)}{F'(\bar y)x} \leq \frac{ \bar y}x\,,\label{eq:y/x,decreasing}
\eea
which implies $\bar y/x$ decreasing in its domain. 
\end{proof}

\begin{lemma}\label{lemma:GLsumma}
Assume $|b|\leq L/2$. Let $F\,:\,\R^+\mapsto\R$ be non-negative, continuously differentiable, with $F'(y)y\geq F(y)$ and $y$ implicitly defined by
\be\label{eq:utile1bis}
F(y)=E_{b,y+x}[\Phi^2]\,,\quad x\geq0\,.
\ee
Then 
\be\label{eq:GL-psi1}
\Psi_1(x):=\frac{E_{b,y+x}[\Phi^2]}{x}
\ee
is decreasing.
\end{lemma}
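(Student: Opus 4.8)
The plan is to read $\Psi_1$ as an instance of the Guerra--Latala functional of Lemma~\ref{lemma:GL} and to feed into it the monotonicity already secured in Lemma~\ref{lemma:utile1bis}. Indeed the hypotheses here---namely $F'(y)y\geq F(y)$ together with the implicit equation $F(y)=E_{y+x}[\Phi^2]$---are exactly those of Lemma~\ref{lemma:utile1bis}. I would therefore begin by invoking that lemma to obtain a unique, differentiable, positive and increasing solution $\bar y(x)$ for which, crucially, $\bar y(x)/x$ is decreasing.

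The second step is to set $f(x):=\bar y(x)+x$ and $g(x):=x$, both positive and increasing on $\R^+$ (one has $f'=\bar y'+1>0$ since $\bar y'>0$ by the previous lemma, and $g'=1$), and to rewrite $\Psi_1$ via the implicit relation $E_{\bar y(x)+x}[\Phi^2]=F(\bar y(x))$ as
\be
\Psi_1(x)=\frac{E_{b,\bar y(x)+x}[\Phi^2]}{x}=\frac{E_{b,f}[\Phi^2]}{g(x)}\,.
\ee
(The bias $b$ is suppressed in the statement but is the same throughout.) By Lemma~\ref{lemma:GL} this is decreasing as soon as the condition \eqref{eq:cond-fg}, that is $g'f\geq c_{b,f}[\Phi]\,f'g$, is met.

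To verify the condition I would appeal to the simpler bound \eqref{eq:simple-bound}. Since $f/g=\bar y(x)/x+1$ is non-increasing by the first step, we have $g'f\geq f'g$; combined with $c_{b,f}[\Phi]\in(0,1)$ from Lemma~\ref{lemma:c} this yields $g'f\geq f'g> c_{b,f}[\Phi]\,f'g$, so \eqref{eq:cond-fg} holds with room to spare and $\Psi_1$ is decreasing.

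I do not expect a genuine obstacle at this stage: the delicate point---that $\bar y(x)/x$ is decreasing, which itself rests on the concavity of $x\mapsto E_{b,y+x}[\Phi^2]$ weighed against the convexity forced by $F'y\geq F$---has already been handled in Lemma~\ref{lemma:utile1bis}, and what remains is essentially bookkeeping. The only care required is to identify $f$ and $g$ correctly and to use the identity $E_{\bar y+x}[\Phi^2]=F(\bar y)$ consistently. As a cross-check one may instead differentiate $\Psi_1=F(\bar y(x))/x$ directly and, inserting the expression for $\bar y'(x)$ from the proof of Lemma~\ref{lemma:utile1bis}, reduce the sign of $\Psi_1'$ to the standing hypothesis $\bar y\,F'(\bar y)\geq F(\bar y)$ together with $c_{b,\bar y+x}[\Phi]<1$.
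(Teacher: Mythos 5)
Your proposal is correct and follows essentially the same route as the paper: both apply Lemma~\ref{lemma:GL} with $f=\bar y(x)+x$, $g=x$, reduce the condition \eqref{eq:cond-fg} to the simpler bound \eqref{eq:simple-bound} (equivalently $x\bar y'(x)\leq \bar y(x)$, i.e.\ $f/g$ non-increasing), and obtain that bound from the fact, proved in Lemma~\ref{lemma:utile1bis}, that $\bar y(x)/x$ is decreasing. Your write-up merely spells out the details (positivity of $f'g$, the role of $c_{b,f}[\Phi]<1$ from Lemma~\ref{lemma:c}) that the paper leaves implicit.
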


\begin{proof}
We apply Lemma \ref{lemma:GL}  with $f=y+x$ and $g=x$ and we see that condition (\ref{eq:simple-bound}) implying $\Psi_1$ decreasing reads $xy'(x)\leq y(x)+x$. This is ensured for all $x>0$ by Lemma \ref{lemma:utile1bis}. 
\end{proof}


\section{Proofs of Theorem \ref{TH:minmax} and Lemma \ref{lemma:repr}}\label{sect:proof}

Here we prove Theorem \ref{TH:minmax}.
We rewrite the RS function we want to optimise
\be\label{eq:RS-again}
\RS(q):=\sum_{x=1}^\nu\a_xE_{0,1}\left[\log\cosh \left(b^{(x)}+\b\eta\sqrt{\sum_{|y-x|=1}\a_y q_y}\right)\right]+\frac{\b^2}{2}\sum_{|x-y|=1}(\a_x-\a_x q_x)(\a_y-\a_y q_y)\,.
\ee
Recall that we assumed the elements of $V$ to be the first $|V|$ odd numbers in $\{1,\ldots \nu\}$.
Note that if $x\in V$ then $\{y\in[\nu]\,:\,|x-y|=1\}\subseteq H$.
We set for $x\in V$
\be\label{eq:q->Q}
Q_x:=\sum_{|y-x|=1} \a_y q_y\,,\quad A_x:=\sum_{|y-x|=1}\a_y\,.
\ee
Note that $Q_x\leq A_x$. We shorten also $p_x:=\a_xq_x$ for $x\in V$.
Next we change coordinates and describe $\RS(q)$ only in the variables $\{ Q_x, p_x\}_{x\in V}$. Then (\ref{eq:RS-again}) is rewritten as
\bea
\overline{\RS}(p,Q)&=&\sum_{x\in V}\a_x E_{0,1}\left[\log\cosh\left(b^{(x)}+\b\eta\sqrt{Q_x}\right)\right]\nn\\
&+&\sum_{x\in H}\a_x E_{0,1}\left[\log\cosh\left(b^{(x)}+\b\eta\sqrt{\sum_{|y-x|=1}p_y}\right)\right]\nn\\
&+&\frac{\b^2}{2}\sum_{x\in V}(\a_x-p_x)(A_x-Q_x)\,.\label{eq:def-RS}
\eea

With a small abuse of notation we will denote the different functions $\overline\RS(q,Q), \overline\RS(p,Q)$ with the same symbol; the meaning will always be clear by the context. We will prove that there is a unique stationary point $(\bar q,\bar Q)\in[0,1]^{|V|}\times[0,1]^{|V|}$ in which the $\min\max$ is realised:
\be\label{eq:toprove}
\min_{q_{x_1},\ldots,q_{x_{|V|}}}\max_{Q_{x_1},\ldots,Q_{x_{|V|}}} \overline\RS(q,Q)=\overline\RS(\bar q,\bar Q)\,. 
\ee
Then, since the change of variables (\ref{eq:q->Q}) is linear and injective, Theorem \ref{TH:minmax} follows. 

In the proof below we shall use the lemmas of Section \ref{sect:GL} always with $\Phi=\tanh$ and assuming that for all $x\in[\nu]$ it is $|b^{(x)}|\leq \frac12\log\sqrt{2+\sqrt 3}$.

\begin{proof}[Proof of \eqref{eq:toprove}] 

We will prove that the stationary points equations
\be
\partial_{Q_x}\overline \RS=0\,, \partial_{p_x}\overline \RS=0\,,\quad x=1,\ldots,\nu\,,
\ee
have a unique solution in which the minimax is attained. 

We first take the derivatives w.r.t. $Q_{x}$ for all $x\in V$. We have
\be\label{eq:der_p1}
\partial_{Q_x}\overline \RS=\frac{\b^2}{2}\left(p_x-\a_xE_{b^{(x)},\b^2 Q_x}\left[\tanh^2\left(\eta\right)\right]\right)\,,\quad \forall\,\,x\in V\,, 
\ee
where we used a Gaussian integration by parts. 

By (\ref{eq:Phi-increasing}) the function $Q_x\mapsto\a_xE_{b^{(x)},\b^2 Q_x}\left[\tanh^2\left(\eta\right)\right]$ is increasing in $Q_x$ at fixed $p_x$. Hence there is a unique point $(p_{x,0}, Q_{x,0})$ such that \eqref{eq:der_p1} vanishes. Then by Lemma \ref{lemma:utile0} (with $G(x)=x$, $b=b^{(x)}$) there is a unique function $Q_x(p_x)$ defined in a neighbourhood of the intersection point.
We shorten
\be\label{eq:Cx}
B_x:=\a_x\tanh^2(b^{(x)})\,\quad C_x:=\a_xE_{b^{(x)},\b^2A_x}\left[\tanh^2\left(\eta\right)\right]\,.
\ee
So the stationary point conditions (\ref{eq:der_p1}) define uniquely the functions 
\be\label{eq:selfQ0}
Q_x\,:\,\left[B_x,C_x\right]\mapsto[0,A_x] 
\ee
via 
\be\label{eq_selfQ}
p_x=\a_xE_{b^{(x)}, \b^2Q_x(p_x)}\left[\tanh^2\left(\eta\right)\right]\,,\quad x\in V\,.
\ee
The functions $Q_x(p_x)$ are non-negative, with $Q_x(B_x)=0$, $Q_x(C_x)=A_x$ and we can prolong them in the interval $[0,C_x]$ setting $Q_x=0$ in $[0,B_x]$. Moreover again by Lemma \ref{lemma:utile0} they are increasing and convex with
\be\label{eq:Q'}
\frac{d}{dp_x}\frac{Q_x(p_x)}{p_x}>0\,.
\ee
Therefore
$$
\partial^2_{Q_x}\overline{RS}<0\,,\forall x\in V\,,
$$
and clearly
$$
\partial^2_{Q_xQ_y}\overline{RS}=0\,,\forall x\neq y\,. 
$$

Now we set
\be\label{eq:ARSridotta1}
\overline\RS^{(1)}(p):=\max_{Q_{1}\ldots,Q_{{|V|}}} \overline \RS(p,Q)=\overline\RS(Q_{1}(p_{1}),Q_{3}(p_{3})\ldots,Q_{2|V|-1}(p_{2|V|-1}), p_{1},\ldots,p_{2|V|-1})\,.
\ee

We compute
\be\label{eq:der-p1}
\partial_{p_1}\overline\RS^{(1)}=\frac{\b^2}{2}\left(Q_1( p_1)-\a_2E_{b^{(2)},\b^2(p_1+p_3)}\left[\tanh^2\left(\eta\right)\right]\right)\,.
\ee
First we study the stationary point equation
\be\label{eq:stat1}
Q_1( p_1)-\a_2E_{b^{(2)},\b^2(p_1+p_3)}\left[\tanh^2\left(\eta\right)\right]=0\,.
\ee
We look at
\be\label{eq:z1}
z(p_1):=\frac{\a_2E_{b^{(2)},\b^2(p_1+\a_3)}\left[\tanh^2\left(\eta\right)\right]}{Q_1( p_1)}\,.
\ee
Since $\frac{Q_1(p_1)}{p_1}$ is monotone increasing (see (\ref{eq:Q'})) and by Lemma \ref{lemma:GL} (with $f(p_1)=p_1+\a_3$ and $g(p_1)=Q_1(p_1)$) 
$$
\frac{\a_2E_{b^{(2)},\b^2(p_1+\a_3)}\left[\tanh^2\left(\eta\right)\right]}{p_1}
$$
is monotone decreasing, we have that $z(p_1)$ is decreasing from $\lim_{p_1\to B_1}z(p_1)=\infty$ to $z(C_1)$. 
Therefore it exists a unique solution $(p_{3,0},p_{1,0})$ to (\ref{eq:stat1}) 
provided 
\be\label{eq:cond1}
z_1(C_1)\leq1\,,
\ee
which is always satisfied (since $Q_1(C_1)=A_1=\a_2$). 

As $\frac{Q_1(p_1)}{p_1}$ is monotone increasing (see (\ref{eq:Q'})) we can apply Lemma \ref{lemma:utile1bis} (with $F=Q_1$, $(x,y)=(p_{3,0},p_{1,0})$). This yields that the stationary point equation (\ref{eq:stat1})
defines uniquely a function $p_1(p_3)$ which satisfies
\be\label{eq:der-p1-IFT}
\frac{d}{dp_3}p_1(p_3)>0\,,\qquad \frac{d}{dp_3}\frac{p_1(p_3)}{p_3}<0\,.
\ee
Moreover (see (\ref{eq:moreover}))
\be\label{eq:derIIRS1}
\partial_{p_1}\left(Q_1( p_1)-\a_2E_{b^{(2)},\b^2(p_1+p_3)}\left[\tanh^2\left(\eta\right)\right]\right)\Big|_{p_1=p_1(p_3)}>0\,,
\ee

Finally by Lemma \ref{lemma:GLsumma} we get
\be\label{eq:p1(p3)}
\frac{d}{dp_3}\frac{E_{b^{(2)},\b^2(p_1+p_3)}\left[\tanh^2\left(\eta\right)\right]}{p_3}<0\,. 
\ee

The inequality \eqref{eq:derIIRS1} is equivalent to $\partial^2_{p_1}\overline\RS^{(1)}\Big|_{p_1=p_1(p_3)}>0$ which ensures that $p_1(p_3)$ is a minimum point. So we can set
\be
\overline{RS}^{(3)}(p_3,\ldots,p_{|V|}):=\min_{p_1} \overline{RS}^{(1)}(p_1,\ldots,p_{|V|})= \overline{RS}^{(1)}(p_1(p_3),p_3,\ldots,p_{|V|})\,. 
\ee

\

Now we compute
\be\label{eq:der-p2}
\partial_{p_3}\overline{RS}^{(3)}=\frac{\b^2}{2}\left(Q_3( p_3)-\a_2E_{b^{(2)}, \b^2(p_1(p_3)+p_3)}\left[\tanh^2(\eta)\right]-\a_4E_{b^{(2)}, \b^2(p_3+p_5)}\left[\tanh^2(\eta)\right]\right)\,
\ee
So the stationary point equation reads as
\be\label{eq:stat2}
Q_3(p_3)-\a_2E_{b^{(2)}, \b^2(p_1(p_3)+p_3)}\left[\tanh^2(\eta)\right]-\a_4E_{b^{(2)}, \b^2(p_3+p_5)}\left[\tanh^2(\eta)\right]=0\,.
\ee
We set
\be\label{eq:z3}
z_3(p_3):=\frac{\a_2E_{b^{(2)}, \b^2(p_1(p_3)+p_3)}\left[\tanh^2(\eta)\right]}{Q_3( p_3)}+\frac{\a_4E_{b^{(2)},\b^2(p_3+\a_5)}\left[\tanh^2(\eta)\right]}{Q_3( p_3)}\,. 
\ee
By (\ref{eq:Q'}) and by the first inequality of (\ref{eq:p1(p3)}) we have
$$
\frac{Q'_3(p_3)}{Q_3(p_3)}\geq \frac{1}{p_3}\geq \frac{p'_1(p_3)}{p_1(p_3)}\,.
$$
Then applying Lemma \ref{lemma:GL} for each summand, that is with $f=p_3+\a_5$ or $f=p_1(p_3)+p_3$ and $g=Q_3(p_3)$ we have that $z_3(p_3)$ is decreasing. Therefore (\ref{eq:stat2}) yields a unique intersection point $(p_{5,0},p_{3,0})$ provided
\be\label{eq:cond2}
z_3(C_3)\leq1\,,
\ee
which holds true for all the values of the parameters. 

Let us now set
$$
F_3(p_3):=Q_3( p_3)-\a_2E_{b^{(2)}, \b^2(p_1(p_3)+p_3)}\left[\tanh^2(\eta)\right]\,.
$$
Note that $F_3(p_3)/p_3$ is monotone increasing: this follows by \eqref{eq:Q'} and \eqref{eq:p1(p3)}. Therefore by Lemma \ref{lemma:utile1bis} (with $F=F_3$ and $(x,y)=(p_{5,0},p_{3,0})$) the stationary point equation (\ref{eq:stat2}) singles out a unique function $p_3(p_5)$ which satisfies
\be\label{eq:der-p1-IFT}
\frac{d}{dp_5}p_3(p_5)>0\,,\qquad \frac{d}{dp_5}\frac{p_3(p_5)}{p_5}<0\,.
\ee
Therefore $p_3(p_5)$ is increasing. 
Moreover (see (\ref{eq:moreover}))
\be\label{eq:derIIRS3}
\partial_{p_3}\left(F_3( p_3)-\a_4E_{b^{(2)}, \b^2(p_3+p_5)}\left[\tanh^2(\eta)\right]\right)\Big|_{p_3=p_3(p_5)}>0\,,
\ee

Finally by Lemma \ref{lemma:GLsumma} we get
\be\label{eq:p3(p5)}
\frac{d}{dp_5}\frac{E_{b^{(2)},\b^2(p_3(p_5)+p_5)}\left[\tanh^2(\eta)\right]}{p_5}<0\,. 
\ee

The inequality (\ref{eq:derIIRS3}) is equivalent to $\partial^2_{p_2}\overline\RS^{(3)}\Big|_{p_3=p_3(p_5)}>0$, that is in $p_3(p_5)$ a minimum is attained. So we can set
\be
\overline{RS}^{(5)}(p_5,\ldots,p_{|V|}):=\min_{p_3} \overline{RS}^{(3)}(p_1(p_3),p_3,\ldots,p_{|V|})= \overline{RS}^{(3)}(p_1(p_3(p_5)),p_3(p_5),p_5,\ldots,p_{|V|})\,. 
\ee

\

In general, for $x\in(2\N-1)\cap[\nu]\setminus \{1\}$ at $x$-th step we have already proven
\be\label{eq:aux-x-1}
\frac{d}{dp_{x}}p_{x-2}(p_{x})>0\,,\quad\frac{d}{dp_{x}}\frac{p_{x-2}(p_{x})}{p_{x}}<0\,,\quad \frac{d}{dp_{x}}\frac{E_{b^{(x-1)},\b^2(p_{x-2}(p_{x})+p_{x})}\left[\tanh^2(\eta)\right]}{p_{x}}<0\,.
\ee
We shorten in what follows $p_{x-2}(p_x)=p_{x-2}$. We want to optimise only w.r.t. $p_x$ the function $\overline\RS^{(x-1)}=\overline\RS^{(x-1)}(p_{x},\ldots,p_{|V|})$ defined by 
\be\label{eq:ARSridottak}
\overline\RS^{(x)}:=\min_{p_{x-2}}\overline\RS^{(x-2)}\,.
\ee


The stationary point equation reads as
\be\label{eq:statx}
Q_x(p_x)-\a_{x-1}E_{b^{(x-1)},\b^2(p_{x-2}(p_x)+p_x)}\left[\tanh^2(\eta)\right]-\a_{x+1}E_{b^{x+1},\b^2(p_x+p_{x+2})}\left[\tanh^2(\eta)\right]=0\,.
\ee
We set
\be\label{eq:zx}
z_x(p_x):=\frac{\a_{x-1}E_{b^{(x-1)},\b^2(p_{x-2}(p_x)+p_x)}\left[\tanh^2(\eta)\right]+\a_{x+1}E_{b^{(x+1)},\b^2(p_{x-2}(p_x)+\a_{x+2})}\left[\tanh^2\left(\eta\right)\right]}{Q_x( p_x)}\,. 
\ee
By (\ref{eq:Q'}) and by the second inequality of \eqref{eq:aux-x-1} we have
$$
\frac{Q'_x(p_x)}{Q_x(p_x)}\geq \frac{1}{p_x}\geq \frac{p'_{x-2}(p_x)}{p_{x-2}(p_x)}\,.
$$
Then applying Lemma \ref{lemma:GL} for each summand, that is with $f=p_x+\a_{x+2}$ or $f=p_{x-2}(p_{x})+p_x$ and $g=Q_x(p_x)$ we have that $z_x(p_x)$ is decreasing. Therefore (\ref{eq:statx}) yields a unique intersection point $(p_{x,0},p_{x+2,0})$ provided
\be\label{eq:condx}
z_x(C_x)\leq1\,,
\ee
which is always satisfied.


We set now
\be
F_x(p_x):=Q_x( p_x)-\a_{x-1}E_{b^{(x-1)},\b^2(p_{x-2}+p_{x})}\left[\tanh^2(\eta)\right]
\ee
and see that $F_x(p_x)/p_x$ is increasing thanks to the combination of (\ref{eq:Q'}) and \eqref{eq:aux-x-1}. 
Therefore we can apply Lemma \ref{lemma:utile1bis} (with $F=F_x$ and $(x,y)=(p_{x+2,0},p_{x,0})$) to prove that the stationary point equation (\ref{eq:statx}) singles out a unique function $p_x(p_{x+2})$ with
\be\label{eq:der-p1-IFT}
\frac{d}{dp_{x+2}}p_{x}(p_{x+2})>0\,,\qquad \frac{d}{dp_{x+2}}\frac{p_x(p_{x+2})}{p_{x+2}}<0\,. 
\ee
Moreover (see (\ref{eq:moreover}))
\be\label{eq:derIIRSx}
\partial_{p_x}\left(F_x( p_x)-\a_{x+1}E_{b^{(x+1)},\b^2(p_x+p_{x+2})}\left[\tanh^2(\eta)\right]\right)\Big|_{p_x=p_x(p_{x+2})}>0\,,
\ee

Finally by combining \eqref{eq:statx}, \eqref{eq:aux-x-1} and Lemma \ref{lemma:GLsumma} we get
\be\label{eq:px(px)}
\frac{d}{dp_{x+2}}\frac{p_x(p_{x+2})}{p_{x+2}}<0\,,\quad \frac{d}{dp_{x+2}}\frac{E_{b^{(x+1)},\b^2(p_x(p_{x+2})+p_{x+2})}\left[\tanh^2(\eta)\right]}{p_{x+2}}<0\,. 
\ee

The inequality (\ref{eq:derIIRSx}) is equivalent to $\partial^2_{p_x}\overline\RS^{(x)}\Big|_{p_x=p_x(p_{x+2})}>0$, that is in $p_x(p_{x+2})$ a minimum is attained. So we can set
\be\label{eq:ARSridottak}
\overline\RS^{(x+2)}:=\min_{p_{x}}\overline\RS^{(x)}\,
\ee
and iterate.
\end{proof}

\

To conclude we add the following

\begin{proof}[Proof of Lemma \ref{lemma:repr}]

Let $t\in[0,1]$ and for any $x\in[\nu]$ $\{\eta^x_j\}_{j\in[N_x]}$ be i.i.d. $\mathcal{N}(0,1)$. Recall
\bea
H'_N&:=&-\sum_{x=1}^\nu \sqrt{\sum_{|y-x|=1}\a_y q_y}\sum_{i=1}^{N_x}\eta_i^x\s^{(x)}_i,\\
H_{N,t}&:=&\sqrt tH_N+\sqrt {1-t}H'_N\,.
\eea
Define the interpolating function
\be
\phi_N(t):=\frac1NEE_{0,1}[\log \widehat{E}_{\s^{(1)}\dots \s^{(\nu)}} e^{-\b H_{N,t}}]\,.
\ee
We have
\be
\left\{
\begin{array}{rcl}
\phi_N(0)&=&\sum_{x=1}^\nu\a_xE_{0,1}[\log\cosh\left(b^{(x)}+\b\eta\sqrt{\sum_{y\,:\, |x-y|=1}\a_y q_y}\right)]\\
\phi_N(1)&=&A_N(\b)\,.
\end{array}
\right.
\ee
Recalling the definition (\ref{eq:meanvt}) we compute
$$
\frac{\dd}{\dd t} \phi_N(t)=\frac{\b}{2\sqrt{t}N}\meanv{H_{N}}_t-\frac{\b}{2\sqrt{1-t}N}\meanv{H'_N(t)}_t,
$$
and by Gaussian integration by parts we get
\bea
\meanv{H_{N}}_t&=&\b\frac{\sqrt t}{N}\sum_{|x-y|=1}N_xN_y(1-\meanv{R^{(x)}R^{(y)}}_t),\nn\\
\meanv{H'_N}_t&=&2\b\sqrt{1-t} \sum_{|x-y|=1}N_x\a_yq_y(1-\meanv{R^{(x)}}_t) \nn.
\eea
Therefore
\be
\frac{\dd}{\dd t}\phi_N(t)=\frac{\b^2}{2}\sum_{|x-y|=1}\a_x\a_y(1-q_x)(1-q_y)-\frac{\b^2}{2}\sum_{|x-y|=1}\a_x\a_y\meanv{(q_x-R^{(x)})(q_y- R^{(y)})}_t\,.
\ee
So we can write
\bea
\phi_N(1)-\phi_N(0)&=&\int_0^1\frac{\dd}{\dd t}\phi_N(t)\dd t\nn\\
&=&\frac{\b^2}{2}\sum_{|x-y|=1}\a_x\a_y(1-q_x)(1-q_y)\nn\\
&-&\frac{\b^2}{2}\sum_{|x-y|=1}\a_x\a_y\int_0^1\meanv{(q_x-R^{(x)})(q_y- R^{(y)})}_t\dd t \,.
\eea
Then (\ref{eq:repr}) follows just noting
$$
\RS(q)=\phi_N(0)+\frac{\b^2}{2}\sum_{|x-y|=1}\a_x\a_y(1-q_x)(1-q_y)\,.
$$
\end{proof}

\end{document}